\apptocmd{\sloppy}{\hbadness 10000\relax}{}{}
\newcommand{\phantspace}{\ensuremath{\vphantom{|\Delta_i|}}}
\title{Output-Sensitive Enumeration of Potential Maximal Cliques in Polynomial Space}
\titlerunning{PMC Enumeration in Polynomial Space}
\author{Caroline Brosse\inst{1, 3} \and
Alessio Conte\inst{2}\orcidID{0000-0003-0770-2235} \and
Vincent Limouzy\inst{3}\orcidID{0000-0002-9133-7009} \and 
Giulia Punzi\inst{2,4}\orcidID{0000-0001-8738-1595} \and
Davide Rucci\inst{2}\orcidID{0000-0003-1273-2770}
}
\authorrunning{C. Brosse et al.}
\institute{{CNRS, Université C\^{o}te d’Azur, Inria, I3S, Sophia-Antipolis, France} \\
\email{caroline.brosse@inria.fr}\and
University of Pisa, Italy
\email{\{alessio.conte, giulia.punzi\}@unipi.it,  davide.rucci@phd.unipi.it} \and
{Universit\'e Clermont Auvergne, Clermont
              Auvergne INP, CNRS, Mines Saint-Etienne, \textsc{Limos}, F-63000
              Clermont-Ferrand, France}
\\
\email{vincent.limouzy@uca.fr} \and 
National Institute of Informatics, Japan
}
\begin{document}

\maketitle
\begin{abstract}

A set of vertices in a graph forms a \emph{potential maximal clique} 
if there exists a minimal chordal completion in which it 
is a maximal clique.
Potential maximal cliques were first introduced as a key tool 
to obtain an efficient, though exponential-time algorithm to compute the treewidth of a graph.
As a byproduct, this allowed 
to compute the treewidth of various graph classes in polynomial time.

In recent years, the concept of potential maximal cliques regained 
interest as it proved to be useful for a handful of graph algorithmic problems.
In particular, it turned out to be a key tool to obtain a polynomial time 
algorithm for computing maximum weight independent sets 
in $P_5$-free and $P_6$-free graphs (Lokshtanov et al., SODA `14 \cite{LokshantovVV14}
and Grzeskik et al., SODA `19 \cite{GrzesikKPP19}).
In most of their applications, obtaining all the potential 
maximal cliques constitutes an algorithmic bottleneck, thus motivating the question of how to efficiently enumerate all the potential maximal 
cliques in a graph $G$.

The state-of-the-art algorithm by Bouchitté \& Todinca can enumerate potential maximal cliques in output-polynomial time by using exponential space, a significant limitation for the size of feasible instances.
In this paper, we revisit this algorithm and design an enumeration algorithm that preserves an output-polynomial time complexity while only requiring polynomial space.

\keywords{Potential Maximal Cliques \and Enumeration \and Graph algorithms.}
\end{abstract}

\clearpage

\section{Introduction}

Potential maximal cliques are fascinating objects in graph theory.
A \emph{potential maximal clique}, or \emph{PMC} for short, of a graph is a set of vertices inducing a maximal clique in some minimal triangulation of a graph (see Fig.~\ref{fig:pmc-example} for an example).
These objects were originally introduced by Bouchitté and Todinca in the late 1990s \cite{bouchitte1998minimal} as a key tool to handle the computation of treewidth and minimum fill-in of a graph, which are both \NP-complete problems.
To deal with these problems, the authors adopted an enumerative approach: their idea is to compute the list of all PMCs of the input graph before processing them as quickly as possible.
Just like for maximal cliques, the number of potential maximal cliques of a graph can be large, typically exponential in the number of vertices.
Therefore, an algorithm enumerating the potential maximal cliques of a graph cannot in general run in polynomial time.
However, the algorithm proposed by Bouchitté and Todinca can be used to build efficient exact exponential algorithms for the \NP-complete problems they considered.
\begin{figure}[ht]
    \centering
    \subfloat[]{\label{fig:pmc-example:sub:G}
    \centering
    \includegraphics[width=.32\textwidth]{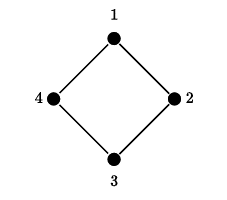}
    }
    \hfill
    \subfloat[]{\label{fig:pmc-example:sub:triang}
    \centering
    \includegraphics[width=.32\textwidth]{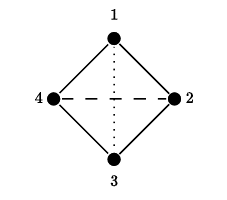}
    }
    \hfill
    \subfloat[]{\label{fig:pmc-example:sub:min-sep}
        \centering
        \includegraphics[width=.33\textwidth]{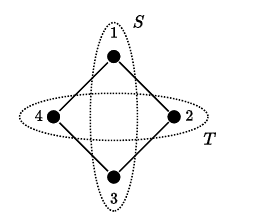}
    }
    \caption{(a) a simple graph on 4 vertices.
    (b) Two possible minimal triangulations of the graph (dashed line or dotted line).
    (c) Two minimal separators of the graph: $\{1,3\}$ and $\{2,4\}$.
    The potential maximal cliques of (a) include all the maximal cliques of all the different minimal triangulations, that is to say, $\{1,2,3\}$ and $\{1,3,4\}$ from the dotted triangulation and $\{1,2,4\}$, $\{2,3,4\}$ from the dashed triangulation.}
    \label{fig:pmc-example}
\end{figure}
The guarantee of efficiency is given by the \emph{output-polynomial} complexity measure.
An enumeration algorithm is said to have output-polynomial time complexity if its running time is polynomial both in the input size (as is usually the case) and in the number of solutions that have to be returned.
In some sense, it allows to capture a notion of running time ``per solution'' rather than simply bounding the total execution time by an exponential function.
The output-polynomial complexity also guarantees a running time that is polynomial in the input size only when the number of objects to be generated is known to be polynomial, as it is the case for PMCs, or equivalently for \emph{minimal separators}, in some classes of graphs \cite{bouchitte1998minimal}.

Due to their links with tree-decompositions and the computation of graph parameters such as treewidth, being able to list the potential maximal cliques of a graph efficiently is still at the heart of several algorithms, including recent ones \cite{ChudnovskyPPT20,FominTV15,GrzesikKPP19,LiedloffMT19,LokshantovVV14}.
Being able to guarantee the fastest possible complexity for the enumeration of PMCs is thus of great interest.
The currently best known algorithm is the one proposed by Bouchitté and Todinca \cite{bouchitte2002listing}, which is quadratic in the number of PMCs.
In their series of papers, they asked whether a linear dependency can be achieved, but this problem is still open.
On the other hand, space usage is also an issue: their algorithm needs to remember all the (exponentially many) solutions that have already been found. As memory is in practice limited this can make the computation unfeasible, thus seeking a faster algorithm without improving the space usage may not lead to practical benefits.

Despite growing interest on PMC enumeration, few advances have been made on improving the complexity of the problem \cite{korhonen2019enumerating,korhonen2019solving}.
On a closely related topic, one can mention the computation of all minimal triangulations of a graph, linked with its tree-decompositions, which has received at least two algorithmic improvements during the past decade \cite{carmeli2021efficiently,brosse2022polynomial}.
However it is still unknown if an improvement on the enumeration of all minimal triangulations can help to list the PMCs more efficiently.

\paragraph{Our contribution.}
The main result of the current paper is an algorithm that generates all the potential maximal cliques of a graph in polynomial space, while keeping the output-polynomial time complexity status.
First, in Section~\ref{sec:non-duplication} we show how to modify the original algorithm \cite{bouchitte2002listing} to avoid duplicates: we refine the tests inside the algorithm to be able not to explore the same solution twice.
Then in Section~\ref{sec:polyspace}, we reduce the space usage of our version of Bouchitté and Todinca's algorithm by modifying the exploration strategy.
The main effect of this modification is that remembering all the solutions already found is no longer needed.
Of course, the execution time is affected, but in the end our algorithm still has output-polynomial complexity, and uses only polynomial space.

\section{Definitions and Theoretical Results}

\subsection{Notations and basic concepts}

Throughout the paper, a graph will be denoted by $G=(V,E)$.
As usual, we use $n$ for the number of vertices ($n=|V|$) and $m$ for the number of edges ($m=|E|$).
The neighborhood of a vertex $u$ is the set $N(u)=\{v\in V\mid uv\in E\}$, and the neighborhood of a set of vertices $S$ is the set of vertices that have a neighbor in $S$: $N(S)=\{v\in V\setminus S\mid \exists u\in S,\ uv\in E\}$.

A \emph{minimal triangulation} of the graph $G=(V,E)$ is a chordal graph $H=(V,E\cup F)$ -- that is, a graph without induced cycles of length $4$ or more -- such that for any proper subset $F'$ of $F$, the graph $H'=(V,E\cup F')$ is not chordal.
The \emph{potential maximal cliques}, or \emph{PMC}s, of the graph $G$ are the sets of vertices inducing an inclusion-wise maximal clique in some minimal triangulation of $G$ (see Fig.~\ref{fig:pmc-example:sub:triang}).
The set of all PMCs of the graph $G$ is denoted by $\Pi_G$.

As highlighted by Bouchitté and Todinca \cite{bouchitte1998minimal}, the potential maximal cliques are closely related to other structures in graphs: the \emph{minimal separators}.
In the graph $G$, for two vertices $a$ and $b$, a \emph{minimal $(a,b)$ separator} is a set $S$ of vertices of $G$ such that $a$ and $b$ that are in distinct connected components of $G\setminus S$, and that is minimal for this property.
The \emph{minimal separators} of $G$ are all the minimal $(a,b)$ separators for all pairs $(a,b)$ of vertices (see Fig.~\ref{fig:pmc-example:sub:min-sep}).
We will use $\Delta_G$ to denote the set of all minimal separators of $G$.
The minimal separators are a crucial ingredient to build new PMCs; this is why we should be able to generate them efficiently as well.

Finally, for algorithmic purposes, we consider the vertices of a graph $G$ to be arbitrarily ordered, that is to say, $V=\{v_1,\ldots,v_n\}$.
Then, for any $1\leq i\leq n$, we define the graph $G_i$ as the subgraph of $G$ induced by the vertex set $\{v_1,\ldots,v_i\}$.
Therefore, $G_1 = (\{v_1\},\varnothing)$ and for any $i<n$, $G_{i+1}$ contains exactly one more vertex than $G_i$, together with all the edges of $G$ between it and vertices of $G_i$.
For simplicity of notation, we will use $\Pi_i$ and $\Delta_i$ instead of $\Pi_{G_i}$ and $\Delta_{G_i}$ to denote the sets of PMCs and minimal separators of $G_i$.

\subsection{Background}

To enumerate the potential maximal cliques of a graph, we base our algorithm on the first one proposed by Bouchitté and Todinca.
It uses an incremental approach: the principle is to add the vertices one by one and generate at each step $i$ the PMCs of the graph $G_i$ {by extending the PMCs of $G_{i-1}$ and generating new ones from minimal separators of $G_i$}.

As our algorithm builds up on the original one given by Bouchitté and Todinca \cite{bouchitte2002listing}, we rely on their proofs and terminology.
Namely, to decide efficiently if a set of vertices is a PMC, we will need the notion of \emph{full component}.
Initially introduced for minimal separators, this definition can in fact be given for any subset of the vertex set, even if its removal leaves the graph connected.

\begin{definition}[Full Component] Given a set $S$ of vertices of the graph $G$, a connected component $C$ of $G\setminus S$ is \emph{full for $S$} if every $v \in S$ has a neighbor in $C$.
That is to say, for any $v\in S$ there exists $u\in C$ such that $uv\in E(G)$.
\end{definition}

The potential maximal cliques have been characterized by Bouchitté and Todinca using the full components \cite[Theorem~3.15]{bouchitte2001treewidth}.
This characterization, that we state thereafter, is very useful from an algorithmic point of view since it provides a polynomial test for a subset of the vertices of a graph to be a PMC.

\begin{theorem}[Characterization of PMCs \cite{bouchitte2001treewidth}]\label{thm:characterization_pmc}
Given a graph $G=(V,E)$, a subset $K \subseteq V$ is a PMC of $G$ if and only if
\begin{enumerate}[label=(\alph*)]\itemsep0em
    \item there is no full component for $K$ in $G$, \label{pmc_char_1}
    \item for any two vertices $x$ and $y$ of $K$ such that $xy\notin E$, there exists a connected component $C$ of $G\setminus K$ such that $x, y\in N(C)$. \label{pmc_char_2}
\end{enumerate}
\end{theorem}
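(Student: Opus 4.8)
The plan is to prove the characterization in both directions, connecting the abstract definition of a PMC (maximal clique in a minimal triangulation) to the concrete combinatorial conditions (a) and (b). The key structural fact I would rely on is the correspondence between minimal triangulations and maximal sets of pairwise parallel (non-crossing) minimal separators: a minimal triangulation $H$ of $G$ is obtained by completing into cliques a maximal family of pairwise non-crossing minimal separators of $G$, and the minimal separators of $H$ are exactly the separators in this family. I would treat this correspondence, together with the notion of full components of minimal separators, as established background from Bouchitt\'e and Todinca.

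\smallskip
\noindent\textbf{Necessity (PMC $\Rightarrow$ (a) and (b)).} First I would take a PMC $K$, so $K$ is a maximal clique in some minimal triangulation $H$ of $G$. For condition~\ref{pmc_char_2}, I would argue that every non-edge $xy$ of $G$ with $x,y \in K$ must become an edge in $H$ (since $K$ is a clique in $H$), hence $xy$ is a \emph{fill edge}. I would then use the fact that each fill edge is ``justified'': its endpoints are separated by a minimal separator $S \subseteq K$ of $G$ that was completed, and $x,y$ both lie in the neighborhood of some connected component $C$ of $G \setminus S$, which one can push to a component of $G \setminus K$ meeting both $x$ and $y$. For condition~\ref{pmc_char_1}, I would suppose for contradiction that some component $C$ is full for $K$; using that $K$ is a clique in $H$ and analyzing how $C$ attaches to $K$ in $H$, one derives that $K$ cannot be a \emph{maximal} clique of $H$ (one could extend it, or $K$ fails to be a genuine bag/clique), contradicting the PMC hypothesis.

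\smallskip
\noindent\textbf{Sufficiency ((a) and (b) $\Rightarrow$ PMC).} This is the harder and more constructive direction. Starting from $K$ satisfying (a) and (b), I would explicitly build a minimal triangulation $H$ of $G$ in which $K$ is a maximal clique. The natural construction is: first complete $K$ into a clique; then for each connected component $C$ of $G \setminus K$, recursively (or via a known triangulation procedure) triangulate the graph induced by $C \cup N(C)$, after turning each $N(C)$ into a clique. Condition~(b) guarantees that the added edges inside $K$ are all ``necessary'' (each is forced by a genuine separation), which is what keeps the triangulation \emph{minimal}; condition~(a) guarantees that no component swallows all of $K$, so that $K$ sits as a clique that is not properly contained in a larger clique of $H$, i.e. it is maximal. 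I would then verify that $H$ so constructed is chordal and that the fill edges are each individually necessary, invoking the minimal-separator/fill-edge machinery to certify minimality.

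\smallskip
\noindent The step I expect to be the main obstacle is proving \emph{minimality} of the constructed triangulation in the sufficiency direction: showing that every fill edge inside $K$ is non-removable requires exhibiting, for each such edge $xy$, a witnessing structure (a minimal separator of $H$ containing both endpoints with the two sides of the separation realized), and condition~(b) is exactly the hypothesis that supplies the component $C$ needed to build that witness. Carefully gluing the recursively obtained triangulations of the pieces $C \cup N(C)$ without introducing avoidable chords, and checking that maximality of $K$ as a clique follows cleanly from the absence of a full component, is where the technical care concentrates. For a self-contained treatment I would instead cite Bouchitt\'e and Todinca's original proof of this equivalence and focus only on the parts needed downstream, but the outline above is the route I would take to reprove it from first principles.
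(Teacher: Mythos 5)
The paper offers no proof of this theorem: it is imported as a citation to Bouchitté and Todinca \cite{bouchitte2001treewidth} (their Theorem~3.15), so the only fair comparison is against that original argument, which your outline does track — necessity via justified fill edges and the impossibility of extending a maximal clique of a minimal triangulation, sufficiency via completing $K$ (and the borders $N(C)$) and recursively minimally triangulating each block $G[C \cup N(C)]$. Since you close by saying you would in practice cite the original, your approach and the paper's coincide, and there is no substantive divergence of route.

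Two concrete points if you intend the outline as more than a pointer. First, in the necessity argument for condition~(b) the ``push'' goes the wrong way: for $S \subseteq K$ the components of $G\setminus K$ are \emph{contained in} those of $G\setminus S$, so knowing $x,y \in N(C)$ for a component $C$ of $G\setminus S$ does not hand you a component of $G\setminus K$ whose neighborhood contains both $x$ and $y$. The original proof instead first establishes that the minimal separators of $G$ contained in a PMC $K$ are exactly the sets $N(C')$ for $C'$ a component of $G\setminus K$, and reads the witness component off that correspondence directly. Second, for condition~(a) the missing ingredient is the lemma that in a chordal graph a full component of a clique contains a vertex adjacent to the entire clique (provable via a clique tree or a simplicial vertex); that is what converts ``$C$ is full for $K$'' into ``$K$ is not a maximal clique of $H$'', and your phrase ``analyzing how $C$ attaches to $K$'' hides exactly this step. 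The sufficiency direction you correctly identify as the hard part, and it remains entirely deferred: chordality of the glued graph and non-removability of every fill edge (both inside $K$ and inside the recursively triangulated blocks) are asserted, not proved. As written, the proposal is a faithful plan of the known proof rather than a self-contained proof.
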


Theorem~\ref{thm:characterization_pmc} provides an efficient test to decide if a given subset of the vertices of $G$ is a PMC.
The test can be run in time $O(mn)$.

In our algorithm, as in the original one by Bouchitté and Todinca, the potential maximal cliques will be computed incrementally.
Bouchitté and Todinca proved that for any vertex $a$ of a graph $G$, the PMCs of $G$ can be obtained either from PMCs in $G\setminus\{a\}$, or from minimal separators of $G$ \cite[Theorem~20]{bouchitte2002listing}.
However, they did not prove that this condition was sufficient.
From Theorem~\ref{thm:characterization_pmc}, we are able to deduce the following property of potential maximal cliques: any PMC of $G\setminus\{a\}$ can be {uniquely} extended to a PMC of $G$.

\begin{proposition}\label{prop:pmc-extension}
    Let $G$ be a graph and $a$ be any vertex of $G$.
    For any PMC $K$ of $G\setminus\{a\}$, exactly one between $K$ and $K\cup \{a\}$ is a PMC of $G$.
\end{proposition}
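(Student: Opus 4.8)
The plan is to apply the characterization of Theorem~\ref{thm:characterization_pmc} three times: once to record that $K$ is a PMC of $G'\coloneqq G\setminus\{a\}$ (our hypothesis), and once to each candidate, $K$ and $K\cup\{a\}$, viewed inside $G$. The bridge between the two graphs is a single structural observation about connected components. The graphs $G'\setminus K$ and $G\setminus(K\cup\{a\})$ have the same vertex set and the same induced edges, hence the same connected components, say $C_1,\dots,C_p$. Passing from $G'$ to $G$ for the candidate $K$ only reinserts $a$: if $J\subseteq\{1,\dots,p\}$ indexes the components in which $a$ has a neighbor, then in $G\setminus K$ the vertex $a$ merges exactly the $C_j$ with $j\in J$ into one component $D=\{a\}\cup\bigcup_{j\in J}C_j$, while every $C_j$ with $j\notin J$ survives unchanged.

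The pivot is condition~\ref{pmc_char_1}, the absence of a full component. Since $K$ is a PMC of $G'$, no $C_j$ is full for $K$ in $G'$, that is, each $C_j$ lacks a neighbor of some vertex of $K$. As the neighbors inside each $C_j$ are unaffected by reinserting $a$, this immediately shows that no $C_j$ is full for $K\cup\{a\}$ in $G$, so the candidate $K\cup\{a\}$ always passes condition~\ref{pmc_char_1}. For the candidate $K$, the surviving $C_j$ with $j\notin J$ are likewise not full, so the only component of $G\setminus K$ that could be full is the merged one, $D$. I would therefore split on whether $D$ is full for $K$ in $G$; this dichotomy is what produces the ``exactly one'' in the statement.

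If $D$ is not full, I claim $K$ is the PMC. Condition~\ref{pmc_char_1} holds for $K$ by the previous paragraph together with non-fullness of $D$. For condition~\ref{pmc_char_2}, every non-adjacent pair $x,y\in K$ has, by the characterization in $G'$, a witness component $C'$ of $G'\setminus K$ with $x,y\in N(C')$; this $C'$ is either an unchanged $C_j$ or lies inside $D$, and since $C'\subseteq D$ forces $N(C')\subseteq N(D)$, the pair remains witnessed in $G\setminus K$. At the same time, non-fullness of $D$ yields a vertex $v\in K$ with no neighbor in $D$; then $av\notin E$ and $v$ has no neighbor in any $C_j$ with $j\in J$, so the non-adjacent pair $\{a,v\}$ has no witness component, and $K\cup\{a\}$ fails condition~\ref{pmc_char_2}. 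In the complementary case where $D$ is full, $K$ fails condition~\ref{pmc_char_1}; meanwhile, for $K\cup\{a\}$ the pairs inside $K$ are witnessed exactly as above, and for every non-adjacent pair $\{a,v\}$ the fullness of $D$ supplies a neighbor of $v$ inside $\bigcup_{j\in J}C_j$, giving a component $C_j$ with $a,v\in N(C_j)$; hence $K\cup\{a\}$ meets both conditions and is the PMC.

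I expect the remaining difficulties to be bookkeeping rather than conceptual. The step most likely to need care is the transfer of condition~\ref{pmc_char_2} for pairs inside $K$: the witness component from $G'$ may be absorbed into $D$, so one must verify that absorption never destroys a witness, which is precisely the observation that $C'\subseteq D$ implies $N(C')\subseteq N(D)$. The other point to pin down is that the two cases are genuinely exhaustive and mutually exclusive, so that exactly one of $K$ and $K\cup\{a\}$ survives; this is immediate once we note that the fullness of the single component $D$ is the only thing that can differ between the two candidates under condition~\ref{pmc_char_1}.
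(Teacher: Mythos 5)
Your proof is correct and follows essentially the same route as the paper: both apply the characterization of Theorem~\ref{thm:characterization_pmc} to each candidate, using the observation that $G\setminus(K\cup\{a\})$ and $(G\setminus\{a\})\setminus K$ have the same components while $a$ merges some of them in $G\setminus K$, and both pivot on whether the component of $a$ is full for $K$. The only cosmetic difference is that you establish the ``at most one'' direction by exhibiting an explicit failed condition for the other candidate in each case, whereas the paper simply invokes the incomparability of PMCs.
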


\begin{proof}
    We define $G':=G\setminus\{a\}$, so that $G=G'+a$.
    Let $K$ be a PMC of $G'$ and suppose that $K$ is not a PMC of $G$.
    
    We start by proving that condition~\ref{pmc_char_2} of Theorem \ref{thm:characterization_pmc} remains true for $K$ in $G$.
    Let $x$ and $y \in K$ such that $xy\notin E$ (if two such vertices exist, otherwise \ref{pmc_char_2} is true by emptiness).
    Since $K$ is a PMC of $G'$, there exists a connected component $C'$ of $G'\setminus K$ such that $x$ and $y$ both have neighbors in $C'$.
    We know that any connected component of $G'\setminus K$ is contained in some connected component of $G \setminus K$; in particular there exists a connected component $C$ of $G\setminus K$ such that $C'\subseteq C$.
    Consequently, $x$ and $y$ both have neighbors in $C$ and item~\ref{pmc_char_2} is true for $K$ in $G=G'+a$.

    Therefore, since $K$ is not a PMC of $G$, it means that there exists a full component for $K$ in $G=G'+a$.
    Necessarily this component contains $a$, otherwise it would also be a connected component of $G'\setminus K$, and by hypothesis it is not full.
    In this case, we prove using Theorem~\ref{thm:characterization_pmc} that $K\cup\{a\}$ is a PMC of $G$.
    \begin{enumerate}\itemsep0em
        \item[\ref{pmc_char_1}] The connected components of $G\setminus (K\cup\{a\})$ are the same as those of $G'\setminus K$, so for any connected component $C$ there exist elements of $K\subseteq K\cup\{a\}$ that do not have neighbors in $C$.
        Consequently, there are no full components for $K\cup\{a\}$ in $G$.
        \item[\ref{pmc_char_2}] Let $x$ and $y$ be two vertices of $K\cup\{a\}$ such that $xy\notin E$.
        If $x\neq a$ and $y\neq a$, then condition~\ref{pmc_char_2} is satisfied by $x$ and $y$, since the connected components of $G\setminus (K\cup\{a\})$ are the same as those of $G'\setminus K$.
        Otherwise, we can assume $x=a$ and $ay\notin E(G)$.
        Since there exists a full component for $K$ in $G=G'+a$, in particular there exists a connected component of $G\setminus(K\cup\{a\})$ such that both $a$ and $y$ have neighbors in this component.
        Therefore, condition~\ref{pmc_char_2} is satisfied.
    \end{enumerate}
    Both conditions are satisfied, so by Theorem~\ref{thm:characterization_pmc}, $K\cup\{a\}$ is a PMC of $G$ and the proposition is proved.
    Moreover, since the PMCs of a graph are incomparable sets, we are sure that these are mutually exclusive: $K$ and $K\cup\{a\}$ cannot be both PMCs of the same graph.
    \qed
\end{proof}

\begin{algorithm}[htb]
  \DontPrintSemicolon
  \KwIn{A graph $G=(V,E)$}
  \KwOut{The set of all potential maximal cliques $\Pi_{n}$}
  \SetKwProg{myproc}{Function}{}{}
  \SetKwFunction{enum}{ONE\_MORE\_VERTEX}
  \SetKwFunction{ispmc}{IsPMC}
  \SetKw{and}{and}

  $\Pi_1 \gets \{v_1\}$\;
  \lFor{$i = 2, \dots, |V| $}{
    $\Pi_i \gets \varnothing$;
    \enum{$G, i$}
  }
  \Return $\Pi_n$\;

    \BlankLine

  \myproc{\enum{$G, i$}}{
    \ForAll(){$\pi \in \Pi_{i-1}$  }{
      \lIf{\ispmc{$\pi \cup \{v_i\}, G_i$}}{ADD $\pi \cup \{v_i\}$ to $\Pi_{i}$ }
      \lIf{\ispmc{$\pi, G_i$}}{ADD $\pi$ to $\Pi_{i}$ }
    }

    \ForAll(){$S \in \Delta_i$}{
      \lIf(){\ispmc{$S \cup \{v_i\}, G_i$}}{ ADD $S \cup \{v_i\}$ to $\Pi_{i}$ }
      
      \If(){$v_i \notin S$ \and $S \notin \Delta_{i-1}$}{
        \ForAll{$T \in \Delta_i$}{
          \ForAll{$C$ full component associated to $S$ in $G_i$}{
            \lIf{\ispmc{$S \cup (T \cap C), G_{i}$}}{
             ADD $S \cup (T \cap C)$ to $\Pi_i$
             }
           }
        }
      }
    }
  }
  \caption{Potential maximal cliques enumeration by \cite{bouchitte2002listing}.}
  \label{alg:original_bandt}
\end{algorithm}

\paragraph{Potential maximal cliques generation and minimal separators.}
In the current best known algorithm for enumerating the PMCs, it is crucial to be able to pass quickly through the list of all minimal separators, in what we called subroutine \texttt{GEN}.
The complexity status of the minimal separators enumeration problem has evolved since the introduction of PMCs.
In 2000, Berry et al. \cite{berry2000generating} provided a \emph{polynomial delay} algorithm for the minimal separators enumeration, meaning that the running time needed between two consecutive outputs is polynomial in the input size only.
However, it needs exponential space.
In 2010, Takata \cite{takata2010space} managed to enumerate the minimal $(a,b)$-separators for one pair $(a,b)$ with polynomial delay in polynomial space.
He also gave an output-polynomial algorithm in polynomial space for minimal $(a,b)$-separators for all pairs $(a,b)$.
More recently (WEPA 2019), Bergougnoux, Kanté and Wasa \cite{bergougnoux-disjunctive} presented an algorithm enumerating the minimal $(a,b)$-separators for all pairs $(a,b)$ in polynomial space, with (amortized) polynomial delay.
From a theoretical point of view, it is the currently best known algorithm for the enumeration of all minimal separators.

\paragraph{Original algorithm.} Algorithm~\ref{alg:original_bandt} shows the original strategy proposed in \cite{bouchitte2002listing} for PMC enumeration, based on an iterative argument: {generate the set $\Pi_i$ of PMCs of $G_i$ and keep it in memory to use it to compute $\Pi_{i+1}$ at the next step.}
New potential maximal cliques can be found in two ways, either by expanding an existing PMC from the previous step, or by extending a minimal separator.
The algorithm follows both roads, first try to expand existing cliques, then generate new ones from minimal separators.
This idea can be implemented by storing the family of sets $\mathcal{P} = \left\{ \Pi_i \,\mid\, i \in [1, n]\right\}$.
{The algorithm stops at the end of step $n$, when the set $\Pi_n$ containing all the PMCs of $G_n=G$ has been generated, and returns the whole set of solutions at the end of its execution.}
The strategy is summarized in \cite[Theorem 23]{bouchitte2002listing}, by the function \texttt{One\_More\_Vertex}, which is called for all $i = 1, \dots, n$.
The total time complexity is $O(n^2m|\Delta_G|^2)$.
However, the additional space required by the algorithm is $O(|\mathcal P|) = O(\sum_{i=1}^{n} |\Pi_i|) = O(n |\Pi_n|)$, {a bound that is clearly exponential in $n$} because all the solutions for all $G_i$ must be stored.

\section{Duplication Avoidance in the B\&T Algorithm}\label{sec:non-duplication}

Since our goal is to have a polynomial space algorithm, {our first task is to} rethink the \texttt{One\_More\_Vertex} strategy in a way such that it does not output the same solution twice.

Algorithm~\ref{alg:nondup_onemore_vertex} still stores the family of sets $\mathcal{P}$, but contains additional checks for duplication avoidance, so that we never try to add the same potential maximal clique twice to the same set. 
In particular, for a set $D^*$ generated at line 15 from the sets $S^*$, $T^*$ and $C^*$, the \texttt{Not\_Yet\_Seen}($D^*$)  check works as follows: the nested loops on $S$, $T$ and $C$ are run again to generate all the possible sets $D$ until $D^*$ is found for the first time.
If $D^*$ is found from $S^*$, $T^*$ and $C^*$ during this procedure, then \texttt{Not\_Yet\_Seen}($D^*$) returns $true$, otherwise it returns $false$.
\begin{algorithm}[htb]
  \DontPrintSemicolon
  \KwIn{A graph $G=(V,E)$}
  \KwOut{The set of all potential maximal cliques $\Pi_{n}$}
  \SetKwProg{myproc}{Function}{}{}
  \SetKwFunction{enum}{NONDUP\_ONE\_MORE\_VERTEX}
  \SetKwFunction{nys}{Not\_Yet\_Seen}
  \SetKwFunction{ispmc}{IsPMC}
  \SetKw{add}{append}
  \SetKw{and}{and}

  $\Pi_1 \gets \{v_1\}$\;
  \lFor{$i = 2, \dots, |V| $\label{alg:nondup_onemore_vertex:line:outer_for}}{
    $\Pi_i \gets \varnothing$;
    \enum{$G, i$}
  }
  \Return $\Pi_n$\;

    \BlankLine

  \myproc{\enum{$G, i$}}{
    \ForAll(\tcp*[h]{EXT(i)}){$\pi \in \Pi_{i-1}$ \label{alg:nondup_onemore_vertex:line:forall-previous} }{
      \lIf{\ispmc{$\pi \cup \{v_i\}, G_i$}}{\add{} $\pi \cup \{v_i\}$ to $\Pi_{i}$ \label{alg:nondup_onemore_vertex:line:already_added_1}}
      \lIf{\ispmc{$\pi, G_i$}}{\add{} $\pi$ to $\Pi_{i}$ \label{alg:nondup_onemore_vertex:line:already_added_2}}
    }

    \ForAll(\tcp*[h]{GEN(i)}){$S \in \Delta_i$ \label{alg:nondup_onemore_vertex:line:loop-separator}}{
      \If(){\ispmc{$S \cup \{v_i\}, G_i$} \label{alg:nondup_onemore_vertex:line:before_first_if}}{
      \lIf(){\textcolor{OliveGreen}{\textbf{!}\,\ispmc{$S, G_{i-1}$}}  \label{alg:nondup_onemore_vertex:line:first_if}}
        {\add{} $S \cup \{v_i\}$ to $\Pi_{i}$ } \label{alg:nondup_onemore_vertex:line:corresponding1}
        }
      
      \If(){$v_i \notin S$ \and{} $S \notin \Delta_{i-1}$\label{alg:nondup_onemore_vertex:line:if_vi_not_S}}{
        \ForAll{$T \in \Delta_i$ \label{alg:nondup_onemore_vertex:line:forallT}}{
          \ForAll{$C$ full component associated to $S$ in $G_i$ \label{alg:nondup_onemore_vertex:line:forallC}}{
            \If{\ispmc{$S \cup (T \cap C), G_{i}$}\label{alg:nondup_onemore_vertex:line:before_corresponding2}}{
            \begingroup
            \color{OliveGreen}
            $D \gets S \cup (T \cap C)$\; \label{alg:nondup_onemore_vertex:line:corresponding2}
            \If{\nys{$D$}\label{alg:nondup_onemore_vertex:line:check4}}{ 
            \If{\textbf{!}\,\ispmc{$D, G_{i-1}$}\label{alg:nondup_onemore_vertex:line:DGi-1}}{
                \If{$T \cap C \neq \{v_i\}$ \label{alg:nondup_onemore_vertex:line:check3}}{
                      \eIf{$v_i \in T \cap C$\label{alg:nondup_onemore_vertex:line:check5-1}}{
                        \If{\textbf{!}\,\ispmc{$D \setminus \{v_i\}, G_{i-1}$} \and{} $D \setminus \{v_i\} \notin \Delta_i$\label{alg:nondup_onemore_vertex:line:D_check_forall}}{\add{} $D$ to $\Pi_i$} \label{alg:nondup_onemore_vertex:line:addD}
                      }{
                        \add{} $D$ to $\Pi_i$\;\label{alg:nondup_onemore_vertex:line:last}
                      }
                  }
                }
                }
                \endgroup
                } 
            }
          }
        }
      }
    }
  \caption{PMC enumeration without duplication based on~\cite{bouchitte2002listing}.}
  \label{alg:nondup_onemore_vertex}
\end{algorithm}


\begin{lemma}
    Algorithm~\ref{alg:nondup_onemore_vertex} is correct and produces the same potential maximal cliques as the original Bouchitté and Todinca's algorithm \cite{bouchitte2002listing}, without duplicates.
\end{lemma}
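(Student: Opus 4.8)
The plan is to exploit the fact that Algorithm~\ref{alg:nondup_onemore_vertex} is obtained from the original Algorithm~\ref{alg:original_bandt} purely by inserting extra guard conditions (the green tests) in front of the \texttt{append} operations: it never creates a new candidate, it only suppresses some additions. Consequently the multiset of sets it outputs is contained in that of Algorithm~\ref{alg:original_bandt}, and since every \texttt{append} is still protected by an \texttt{IsPMC} call, Theorem~\ref{thm:characterization_pmc} immediately yields soundness --- everything added to $\Pi_i$ is a genuine PMC of $G_i$. It therefore remains to prove two things: completeness (every PMC of $G_i$ survives the new guards on at least one generating path) and non-duplication (it survives on exactly one). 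I would prove both at once by assigning to each $K\in\Pi_i$ a unique \emph{canonical generating path} and showing (a) that its guards all pass and (b) that the guards of every other path producing $K$ fail.

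First I would set up two structural facts used throughout. The first is that a set cannot be simultaneously a PMC and a minimal separator of the same graph: a minimal separator always has at least two full components, whereas condition~\ref{pmc_char_1} of Theorem~\ref{thm:characterization_pmc} forbids any full component for a PMC. Hence whenever $K\in\Pi_i$ we have $K\notin\Delta_i$, and $K\setminus\{v_i\}$ becomes the \emph{only} separator that could yield $K$ in the form $S\cup\{v_i\}$. The second is Proposition~\ref{prop:pmc-extension}, which says that each $\pi\in\Pi_{i-1}$ contributes exactly one of $\pi,\pi\cup\{v_i\}$ to $\Pi_i$; thus the \texttt{EXT}(i) loop emits exactly $|\Pi_{i-1}|$ sets, pairwise distinct because the only possible collision, $\pi_1\cup\{v_i\}=\pi_2$, is ruled out by $v_i\notin\pi_2$.

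Then I would partition $\Pi_i$ according to whether $K$ is \emph{inherited}, i.e.\ $K\setminus\{v_i\}\in\Pi_{i-1}$, and, if not, according to whether $v_i\in K$ and whether $K\setminus\{v_i\}\in\Delta_i$. The inherited PMCs are exactly those emitted by \texttt{EXT}(i); for them, the guards \texttt{!IsPMC}$(S,G_{i-1})$ on line~\ref{alg:nondup_onemore_vertex:line:first_if}, \texttt{!IsPMC}$(D,G_{i-1})$ on line~\ref{alg:nondup_onemore_vertex:line:DGi-1}, and \texttt{!IsPMC}$(D\setminus\{v_i\},G_{i-1})$ on line~\ref{alg:nondup_onemore_vertex:line:D_check_forall} are precisely what stops \texttt{GEN}(i) from re-emitting them. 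The non-inherited PMCs are handled by \texttt{GEN}(i): relying on Bouchitté and Todinca's result that the forms $S\cup\{v_i\}$ and $S\cup(T\cap C)$ cover every such PMC, I would verify case by case that the canonical path's guards pass. The dividing line is the test $D\setminus\{v_i\}\notin\Delta_i$ on line~\ref{alg:nondup_onemore_vertex:line:D_check_forall}: a non-inherited $K\ni v_i$ with $K\setminus\{v_i\}\in\Delta_i$ is emitted uniquely by the first \texttt{GEN} branch via $S=K\setminus\{v_i\}$ and is suppressed here, whereas one with $K\setminus\{v_i\}\notin\Delta_i$ cannot be produced by the first branch and is emitted by the second on line~\ref{alg:nondup_onemore_vertex:line:addD}; the two conditions are mutually exclusive, so exactly one branch fires. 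A non-inherited PMC avoiding $v_i$ falls through to line~\ref{alg:nondup_onemore_vertex:line:last}, and intra-branch duplication inside the second branch is removed outright by the \texttt{Not\_Yet\_Seen} test on line~\ref{alg:nondup_onemore_vertex:line:check4}, which by construction lets a triple $(S,T,C)$ proceed only on the first occurrence of $D$.

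The step I expect to be the main obstacle is reconciling \texttt{Not\_Yet\_Seen} with the guards on lines~\ref{alg:nondup_onemore_vertex:line:check3}--\ref{alg:nondup_onemore_vertex:line:D_check_forall}, because those guards inspect $T\cap C$ (through $T\cap C\neq\{v_i\}$ and $v_i\in T\cap C$), a quantity that may differ among the several triples generating the same $D$, while \texttt{Not\_Yet\_Seen} freezes the decision at the \emph{first} such triple. I would resolve this by showing the decision depends on $D$ alone: since $v_i\notin S$ on this path, $v_i\in T\cap C$ is equivalent to $v_i\in D$, and the degenerate case $T\cap C=\{v_i\}$ forces $D\setminus\{v_i\}=S\in\Delta_i$, which is already excluded on line~\ref{alg:nondup_onemore_vertex:line:D_check_forall}. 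Finally, in every branch where the second \texttt{GEN} loop is made to skip $D$, I would exhibit the alternative path (\texttt{EXT} or the first \texttt{GEN} branch) that emits $D$ instead, thereby establishing that each PMC is produced on exactly one path and that completeness and non-duplication hold simultaneously.
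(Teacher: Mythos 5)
Your proposal is correct and follows essentially the same route as the paper: both reduce the statement to showing that each inserted guard fails exactly when the corresponding PMC has already been added along another generating path (the \texttt{EXT} loop, the first \texttt{GEN} branch, or an earlier $(S,T,C)$ triple), relying on Proposition~\ref{prop:pmc-extension} and the completeness of the original algorithm. Your solution-centric organization (a canonical generating path per PMC) is just a reindexing of the paper's check-by-check case analysis, though you are usefully more explicit on two points the paper glosses over: that a set cannot simultaneously be a PMC and a minimal separator, and that the outcome of the $T\cap C$ tests at lines~\ref{alg:nondup_onemore_vertex:line:check3}--\ref{alg:nondup_onemore_vertex:line:D_check_forall} is determined by $D$ alone.
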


\begin{proof}
Algorithm~\ref{alg:nondup_onemore_vertex} differs from the initial algorithm by \cite{bouchitte2002listing} only in the highlighted parts.
The changes consist in introducing additional checks before adding some given PMC to $\Pi_i$.
Specifically, we add the following checks:
\begin{enumerate}[label=$(\roman*)$]
    \item\label{lemma1:item:0} \nys{$D$} at line~\ref{alg:nondup_onemore_vertex:line:check4};
    \item\label{lemma1:item:1} \textbf{!}\,\ispmc{$S, G_{i-1}$} at line~\ref{alg:nondup_onemore_vertex:line:first_if};
    \item\label{lemma1:item:2} \textbf{!}\,\ispmc{$D, G_{i-1}$} at line~\ref{alg:nondup_onemore_vertex:line:DGi-1};
    \item\label{lemma1:item:3} $T\cap C \neq \{v_i\}$ at line~\ref{alg:nondup_onemore_vertex:line:check3};
    \item\label{lemma1:item:4} \textbf{!}\,\ispmc{$D\setminus \{v_i\}, G_{i-1}$} \textbf{and} $D \setminus \{v_i\} \not \in \Delta_i$ at line~\ref{alg:nondup_onemore_vertex:line:D_check_forall}.
\end{enumerate}
The \emph{PMC corresponding to a check} is the PMC that is not added to $\Pi_i$ when such check fails. 
Specifically, the PMC corresponding to check~\ref{lemma1:item:1} is $S \cup \{v_i\}$ (line~\ref{alg:nondup_onemore_vertex:line:corresponding1}), while for all other checks the corresponding PMC is set $D$ defined at line~\ref{alg:nondup_onemore_vertex:line:corresponding2} (see lines~\ref{alg:nondup_onemore_vertex:line:addD} and~\ref{alg:nondup_onemore_vertex:line:last}).  
As the underlying enumeration strategy is unchanged, the correctness of our algorithm follow from these two statements: 
\begin{enumerate}\itemsep0em
    \item check~\ref{lemma1:item:0} fails if and only if $D$ was previously processed by another choice of $S,T,C$ over the three for loops.
    The validity of this statement directly follows from the definition of Function~\nys{}.
    \item\label{it:prooflem1-2} a check among \ref{lemma1:item:1}-\ref{lemma1:item:4} fails in the  call of \texttt{Nondup\_One\_More\_Vertex}($G,i$) if and only if the corresponding potential maximal clique already belongs to $\Pi_i$ at that moment of computation. 
\end{enumerate} 
These statements also guarantee that each solution is inserted only once into $\Pi_n$ during the execution of Algorithm~\ref{alg:nondup_onemore_vertex}, that is to say, no duplicate solution is processed.
In what follows we prove item~\ref{it:prooflem1-2} by analyzing checks \ref{lemma1:item:1}-\ref{lemma1:item:4} separately.

First, assume that check \ref{lemma1:item:1} is reached and fails: this happens if and only if $S \cup \{v_i\}$ is a potential maximal clique of $G_i$, and $S \in \Pi_{i-1}$, which is true if and only if the for loop at line~\ref{alg:nondup_onemore_vertex:line:forall-previous} considered $\pi =S$ at some point, and the check at line~\ref{alg:nondup_onemore_vertex:line:already_added_1} was successful, meaning that $S \cup \{v_i\}$ was added to $\Pi_i$ at that time.

Let us now consider checks \ref{lemma1:item:2}-\ref{lemma1:item:4}, all corresponding to the same potential maximal clique $D$.
For these checks we have a shared setting, i.e., checks \ref{lemma1:item:2}-\ref{lemma1:item:4} happen when $S \in \Delta_i \setminus \Delta_{i-1}$ such that $v_i \not \in S$; $T \in \Delta_i$; $C$ is a full component associated to $S$ in $G_i$, and finally $D = S\cup (T \cap C)$ is a PMC of $G_i$.
We refer to this specific setting as the \emph{common checks setting}, and it will serve as a set of hypotheses for the rest of the proof. 
Notice that this setting is the same set of checks performed by Algorithm~\ref{alg:original_bandt} to determine if $D$ is a PMC in $G_i$.

In the common checks setting, condition \ref{lemma1:item:2} is reached and fails if and only if \ispmc{$D, G_{i-1}$}, i.e., $D \in \Pi_{i-1}$. This happens if and only if $D$ was considered as $\pi$ during the \texttt{forall} loop at line~\ref{alg:nondup_onemore_vertex:line:forall-previous} and the check at line~\ref{alg:nondup_onemore_vertex:line:already_added_2} was successful (so that $D$ is a potential maximal clique of $G_i$), thus if and only if $D$ was added to $\Pi_i$ at that time.

Consider now check \ref{lemma1:item:3}: assuming the common checks setting, we reach and fail this check if and only if \ref{lemma1:item:2} succeeds and $T \cap C = \{v_i\}$, which is equivalent to $D = S \cup \{v_i\}$. Thus, this check fails if and only if $D$ was already added at line~\ref{alg:nondup_onemore_vertex:line:already_added_1} or at line~\ref{alg:nondup_onemore_vertex:line:corresponding1} (according to whether $S \in \Pi_{i-1}$ or not).

Finally, check \ref{lemma1:item:4} is reached, in the common checks setting, whenever \ref{lemma1:item:2} and \ref{lemma1:item:3} succeed, and $v_i \in T \cap C$.
Thus for \ref{lemma1:item:4} to fail it is either that $D \setminus \{v_i\}$ is a PMC of $G_{i-1}$ or $D \setminus \{v_i\}$ is a minimal separator of $G_i$. 
We have that $D \setminus \{v_i\}$ is a PMC of $G_{i-1}$ if and only if the algorithm already processed the same set $D$ on line~\ref{alg:nondup_onemore_vertex:line:already_added_2}, thus we do not add it again to $\Pi_i$.
As for the second part, we only need to consider what happens when $D \setminus \{v_i\}$ is not a PMC of $G_{i-1}$, but $D \setminus \{v_i\}$ is a minimal separator of $G_i$. This happens if and only if $S= D \setminus \{v_i\}$ was already processed during the \texttt{forall} loop of line~\ref{alg:nondup_onemore_vertex:line:loop-separator}, and was added to $\Pi_i$ at line~\ref{alg:nondup_onemore_vertex:line:first_if}.
\qed
    
\end{proof}
\begin{lemma}
    Algorithm~\ref{alg:nondup_onemore_vertex} has output polynomial time complexity of $O(n^8 |\Pi_G|^4)$.
\end{lemma}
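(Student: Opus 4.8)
The plan is to bound the total running time by summing, over the $n$ incremental steps, the cost of a single call to \texttt{Nondup\_One\_More\_Vertex}$(G,i)$, and then to convert every quantity depending on $\Delta_i$ or $\Pi_i$ into the output parameter $|\Pi_G|$. Two structural facts drive this conversion. First, Proposition~\ref{prop:pmc-extension} (applied with $a=v_i$) yields an injection $\Pi_{i-1}\hookrightarrow\Pi_i$ sending each PMC $K$ to whichever of $K$, $K\cup\{v_i\}$ is a PMC of $G_i$; injectivity holds because $K$ is recovered from its image by intersecting with $V(G_{i-1})$. Hence $|\Pi_1|\le\cdots\le|\Pi_n|=|\Pi_G|$, so $|\Pi_i|\le|\Pi_G|$ for every $i$. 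Second, I invoke the standard bound $|\Delta_H|=O(|V(H)|\cdot|\Pi_H|)$ relating minimal separators to PMCs (every minimal separator equals $N(C)$ for a component $C$ of $G\setminus\Omega$ of some PMC $\Omega$, of which there are at most $|V(H)|$ per PMC), which on $H=G_i$ gives $|\Delta_i|=O(n\,|\Pi_G|)$. Throughout I use that \texttt{IsPMC} costs $O(mn)=O(n^3)$ by Theorem~\ref{thm:characterization_pmc}.

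Next I would cost the two phases of a step. The \texttt{EXT} loop performs $O(|\Pi_{i-1}|)$ iterations with two \texttt{IsPMC} calls each, i.e.\ $O(n^3|\Pi_G|)$, which is negligible. The \texttt{GEN} loop is controlled by the triple $(S,T,C)$ with $S,T\in\Delta_i$ and $C$ a full component of $S$ in $G_i$; since a vertex set has at most $n$ components, the number of triples is $O(n\,|\Delta_i|^2)$, and the work charged directly to a triple (forming $T\cap C$, the guard tests, and one \texttt{IsPMC}) is polynomial, dominated by $O(n^3)$. The genuinely new cost is the call to \texttt{Not\_Yet\_Seen}$(D)$, which re-runs the entire triple loop to locate the first $(S,T,C)$ producing $D$.

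The central step, and the main obstacle, is bounding \texttt{Not\_Yet\_Seen} without letting the re-run inflate the exponent. The key observation is that inside the re-run one need not recompute \texttt{IsPMC}: for each re-examined triple it suffices to form $S\cup(T\cap C)$ and test set-equality with $D$ in $O(n)$ time, because any triple whose union equals $D$ automatically passes the PMC test — it is literally the same set $D$, already known to be a PMC of $G_i$. Thus one invocation of \texttt{Not\_Yet\_Seen} costs $O(n\,|\Delta_i|^2)\cdot O(n)=O(n^2|\Delta_i|^2)$ (the per-$S$ guard tests such as ``$S\notin\Delta_{i-1}$'' are charged once per $S$ and absorbed), and it is invoked at most once per triple, i.e.\ $O(n\,|\Delta_i|^2)$ times. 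Hence a single step costs $O(n\,|\Delta_i|^2)\cdot O(n^2|\Delta_i|^2)=O(n^3|\Delta_i|^4)$, which dominates both the \texttt{EXT} phase and the direct \texttt{IsPMC} work of \texttt{GEN}.

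Finally I would sum over the $n$ steps and substitute the bounds of the first paragraph: the total is $O\!\big(n\cdot n^3|\Delta_i|^4\big)=O\!\big(n^4(n\,|\Pi_G|)^4\big)=O(n^8|\Pi_G|^4)$, as claimed; since this is polynomial in $n$ and in the number $|\Pi_G|$ of produced solutions, the algorithm is output-polynomial. The delicate points to verify carefully are (i) that \texttt{Not\_Yet\_Seen} indeed avoids any \texttt{IsPMC} recomputation, so that its per-triple cost stays $O(n)$ rather than $O(n^3)$ — this is precisely what keeps the exponent at $n^8$ rather than $n^{10}$ — and (ii) that the auxiliary membership tests ``$S\notin\Delta_{i-1}$'' and ``$D\setminus\{v_i\}\notin\Delta_i$'', as well as the polynomial-delay enumeration of $\Delta_i$ itself, are all of polynomial cost and thus subsumed by the dominant term.
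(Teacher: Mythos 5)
Your proposal is correct and follows essentially the same accounting as the paper: cost the two loops per step, bound the $(S,T,C)$ triples by $O(n|\Delta_i|^2)$, charge $O(n^2|\Delta_i|^2)$ per \texttt{Not\_Yet\_Seen} re-run, sum over $n$ steps, and convert via $|\Delta_G|\le n|\Pi_G|$. Your explicit justification that the re-run can skip the \texttt{IsPMC} recomputation (only set-equality with $D$ matters) is a detail the paper leaves implicit but relies on for its own $O(n^2|\Delta_i|^2)$ bound, so this is a welcome clarification rather than a deviation.
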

\begin{proof}
Before describing the complexity of the algorithm, note that the \texttt{IsPMC} check function can be implemented to run in $O(nm)$ time, using Theorem~\ref{thm:characterization_pmc} and \cite[Corollary 12]{bouchitte2002listing}. Additionally, 
{we assume that we can generate the set $\Delta_i$ with $O(n^3m)$ delay and polynomial space, using the algorithm from \cite{bergougnoux-disjunctive}.}
Finally, in what follows we assume that the sets $\Pi_i$, $\Delta_i$ and each $S \in \Delta_i$ are implemented as linked lists, so that we can append a new potential maximal clique to them in $O(1)$ time, while membership checks require linear time. 

We start by analyzing the \texttt{Nondup\_One\_More\_Vertex} function: this can be calculated by summing the complexity of the first \texttt{forall} loop on line~\ref{alg:nondup_onemore_vertex:line:forall-previous} and the second \texttt{forall} loop on line~\ref{alg:nondup_onemore_vertex:line:loop-separator}{, plus the $O(n^3m|\Delta_i|)$ time required to compute $|\Delta_i|$}.
The first loop, performs two \texttt{IsPMC} checks per element of $\Pi_{i-1}$.
Thus, its total running time is $O(|\Pi_{i-1}|nm) = O(nm|\Pi_G|)$. 

{The costful part is the second \texttt{forall} loop.}
It executes $|\Delta_i|$ iterations, during each of which we (a) perform two \texttt{IsPMC} checks ($O(nm)$ time) at lines~\ref{alg:nondup_onemore_vertex:line:before_first_if}-\ref{alg:nondup_onemore_vertex:line:first_if}; (b) check if $v_i$ belongs to $S$ and if $S$ belongs to $\Delta_{i-1}$, by scanning respectively $S$ and $|\Delta_{i-1}|$ in $O(|S|) = O(n)$ and $O(|\Delta_{i-1}|)$; (c) perform the \texttt{forall} loop at line~\ref{alg:nondup_onemore_vertex:line:forallT}.
The complexity of the second loop is therefore $O(|\Delta_i|(nm + |\Delta_{i-1}| + n + C_{12}))$, where $C_{12}$ is the complexity of the loop at line~\ref{alg:nondup_onemore_vertex:line:forallT}, which we now analyze.
The loop over $T \in \Delta_i$ counts $|\Delta_i|$ iterations, each of which costs $O(n \cdot (n + nm + n^2|\Delta_i|^2))$. Indeed, we iterate over all full components $C$ associated to $S$ in $G_i$, which can be $n$ in the worst case (when each vertex is a separate full connected component); for each of these we need to compute $S \cup (T \cap C)$ and later check if $v_i\in T\cap C$, which can both be done in $O(n)$ time, then perform up to three \texttt{IsPMC} calls in $O(nm)$ time. Finally, we need further $O(n^2|\Delta_i|^2)$ time for computing \texttt{Not\_Yet\_Seen}, as implemented with the loop restart mentioned above, and another $O(|\Delta_i|)$ to check whether $D\setminus v_i$ belongs to $\Delta_i$ at line~\ref{alg:nondup_onemore_vertex:line:D_check_forall}. 
Putting everything together we have:
    \begin{align*}
        &O(
        \underbracket{|\Delta_i| \phantspace}_{\text{\ref{alg:nondup_onemore_vertex:line:loop-separator}}} 
        (\underbracket{nm \phantspace}_{\text{\ref{alg:nondup_onemore_vertex:line:before_first_if}-\ref{alg:nondup_onemore_vertex:line:first_if}}}
        + \underbracket{|\Delta_i|+n\phantspace}_{\text{\ref{alg:nondup_onemore_vertex:line:if_vi_not_S}}}
        \underbracket{|\Delta_i|\phantspace}_{\text{\ref{alg:nondup_onemore_vertex:line:forallT}}}(\underbracket{n\phantspace}_{\text{\ref{alg:nondup_onemore_vertex:line:forallC}}}
        (\underbracket{nm + n\phantspace}_{\text{\ref{alg:nondup_onemore_vertex:line:before_corresponding2}-\ref{alg:nondup_onemore_vertex:line:corresponding2}}}
        + \underbracket{n^2|\Delta_i^2| \phantspace}_{\text{\ref{alg:nondup_onemore_vertex:line:check4}}}
        + \underbracket{nm\phantspace}_{\text{\ref{alg:nondup_onemore_vertex:line:DGi-1}}}
        + \underbracket{\hphantom{n} n \phantspace \hphantom{n}}_{\text{\ref{alg:nondup_onemore_vertex:line:check3}-\ref{alg:nondup_onemore_vertex:line:check5-1}}}
        + \underbracket{nm + |\Delta_i|\phantspace}_{\text{\ref{alg:nondup_onemore_vertex:line:D_check_forall}}})))) \\
        =\ &O(
        |\Delta_i|
        (
        n^3 |\Delta_i|^3 + n^2m |\Delta_i|
        )
        )  
        = O(
        n^2 |\Delta_i|^2
        (
        n |\Delta_i|^2+ m
        )
        ).
    \end{align*}%
    This is the cost of a call to the {second loop of Function~}\texttt{Nondup\_One\_More\_Vertex} with fixed $i$.
    Summing this cost {with the cost of the first \texttt{forall} loop} and of the computation of $\Delta_i$ for all the $n$ calls to the function from line~\ref{alg:nondup_onemore_vertex:line:outer_for} we obtain the overall time complexity for Algorithm~\ref{alg:nondup_onemore_vertex}: 
    $O(n^2m|\Pi_G| + n^4m|\Delta_G|+ n^3 |\Delta_G|^2 \cdot(n|\Delta_G|^2 + m))=$ $O(n^2m|\Pi_G| + n^4m|\Delta_G| + n^3 |\Delta_G|^2 \cdot(n|\Delta_G|^2 + m))$.
    
    The total complexity of Algorithm~\ref{alg:nondup_onemore_vertex} is polynomial in the number of PMCs of the graph.
    To highlight this, we use the known inequality $|\Pi_G| \geq |\Delta_G|/n$ \cite{bouchitte2002listing}.
    Thus we obtain the final output-sensitive complexity of 
    \begin{align*}
        &O(n^2m|\Pi_G| + n^4m|\Delta_G| + n^3 |\Delta_G|^2  (n |\Delta_G|^2 + m)) \\ 
         =\ & O(n^2m|\Pi_G| + n^5m|\Pi_G| + n^3(n|\Pi_G|)^2 \cdot ((n|\Pi_G|)^2 n + m)) \\
         =\ & O(n^5m|\Pi_G| + n^5m|\Pi_G| + n^5 |\Pi_G|^2 \cdot (n^3 |\Pi_G|^2 + m))  = O(n^8 |\Pi_G|^4).\qquad \qed
    \end{align*}
\end{proof}

\section{Polynomial Space Algorithm}\label{sec:polyspace}

Algorithm~\ref{alg:nondup_onemore_vertex} has output-sensitive time complexity and lists all the potential maximal cliques without duplicates, but it still uses more than polynomial space because it has to store the sets of solutions $\Pi_i$ before returning $\Pi_n$.
Therefore, in this section we show how to adapt the algorithm to output solutions as soon as they are found, without having to store $\Pi_i$ for duplicate detection.
The key idea is to change the way in which we traverse the solution space from breadth-first to depth-first.
Said otherwise, once we get a PMC $K$ in $G_i$, we immediately extend it to a PMC of $G_n$, which is always possible according to Proposition~\ref{prop:pmc-extension}, before looking for a new one.
This idea is summarized in Algorithm~\ref{alg:pmc_dfs}.

\begin{algorithm}[htb]
  \DontPrintSemicolon
  \KwIn{An integer $i$ corresponding to $v_i \in V$, a PMC $\pi$ in $G_{i-1}$ ($G$ is implicit)}
  \SetKwProg{myproc}{Function}{}{}
  \SetKwFunction{rec}{EXT}
  \SetKwFunction{gen}{GEN}
  \SetKwFunction{nys}{Not\_Yet\_Seen}
  \SetKwFunction{ispmc}{IsPMC}
  \SetKw{and}{and}

  \For{$i = 1, \dots, |V|$}{
      \lForAll{$\pi' \in $ \gen{$G, i$}}{
        \rec{$i+1, \pi'$}
      }
    }

    \myproc{\rec{$i, \pi$} }{
      \lIf{$i=n$}{ \textbf{output} $\pi$}

      \lIf{\ispmc{$\pi \cup \{v_i\}, G_i$}}{
        \rec{$i+1, \pi \cup \{v_i\}$}
      }
      \lElseIf{\ispmc{$\pi, G_i$}}{
        \rec{$i+1, \pi$}
      }
    }
    
  \myproc{\gen{$G, i$}}{
      \lIf(){$i=1$}{
        \textbf{yield} $\{v_1\}$ \and{} \Return
      }
      
    \ForAll(\tcp*[h]{$S$ is a minimal separator in $G_i$}){$S \in \Delta_i$}{
      \If(){{\textbf{!}\,\ispmc{$S, G_{i-1}$}} \and{} \ispmc{$S \cup \{v_i\}, G_i$}}
        {\textbf{yield} $S\cup \{v_i\}$\label{alg:pmc_dfs:line:cupvi}}
      
      \If(){$v_i \notin S$ \and{} $S \notin \Delta_{i-1}$}{
        \ForAll{$T \in \Delta_i$}{
          \ForAll{$C$ full component associated to $S$ in $G_i$}{
            \If{\ispmc{$S \cup (T \cap C), G_i$}}{
                $D \gets S \cup (T \cap C)$\;
                \If{\nys{$D$}\label{alg:pmc_dfs:line:nys}}{
                    \If{\textbf{!}\,\ispmc{$D, G_{i-1}$}\label{alg:pmc_dfs:line:checkD}}{
                        \If{$T \cap C \neq \{v_i\}$}{
                            \eIf{$v_i \in T \cap C$}{
                                \If{\textbf{!}\,\ispmc{$D \setminus \{v_i\}, G_{i-1}$\label{alg:pmc_dfs:line:checkD-minus-vi}} \and{} $D \setminus \{v_i\} \notin \Delta_i$}{\textbf{yield} $D$} 
                            }{ 
                                \textbf{yield} $D$\;
                            }
                        }
                    }
                }
            }
        }
    }
}
}
}

    \caption{Depth-first algorithm for output-polynomial time, polynomial space potential maximal clique enumeration.}
    \label{alg:pmc_dfs}
\end{algorithm}
First, we split the \texttt{One\_More\_Vertex} function into two parts: the \texttt{EXT} routine extends a given potential maximal clique $\pi$ in $G_i$ to a PMC of $G$ by recurring on $i+1, i+2, \dots, n$.
The \texttt{EXT} procedure is presented here as a recursive function in order to highlight the key idea of a depth-first traversal of the search space, but it is easy to rewrite it iteratively, saving the additional space required for handling recursion.
The second part of the original function is the \texttt{GEN} routine,
which generates all the new PMCs of the graph $G_i$, yielding them one by one to the \texttt{EXT} procedure.
In particular, we see the \texttt{GEN} function as a \emph{generator} of PMCs, i.e., the solutions are yielded during the execution while keeping the internal state\footnote{Consider \texttt{GEN}($G, i$) as an iterator over the set $\Pi_i$. The first call, for each $i$, will compute everything needed to output $\pi_{i,1} \in \Pi_i$. Then, when called with the same $i$, it will produce $\pi_{i,2} \in \Pi_i$ without recomputing everything from scratch and so on.\label{footnote}}.
This way the total running time of the function does not change, and we process each PMC immediately after it has been generated.

\begin{proposition}\label{prop:claim}
    Let $K$ be a potential maximal clique of $G_i$.
    If $K$ is not a PMC of $G_{i+1}$, then it cannot be a PMC of any $G_j$ with $j > i$. 
\end{proposition}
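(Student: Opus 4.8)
The plan is to reduce everything to the characterization of Theorem~\ref{thm:characterization_pmc}, exploiting the fact that, since $K\subseteq V(G_i)$, every vertex added after step $i$ has index larger than $i$ and therefore lies \emph{outside} $K$. First I would record a monotonicity observation already implicit in the proof of Proposition~\ref{prop:pmc-extension}: because $K$ is a PMC of $G_i$, condition~\ref{pmc_char_2} holds there, and since any connected component of $G_i\setminus K$ is contained in a connected component of $G_j\setminus K$ for every $j\ge i$, condition~\ref{pmc_char_2} keeps holding for $K$ in every $G_j$ with $j>i$. Consequently, for such a $G_j$ the \emph{only} way $K$ can fail to be a PMC is through the failure of condition~\ref{pmc_char_1}, i.e.\ the existence of a full component for $K$.

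Next I would extract what the hypothesis gives at step $i+1$. Since $K$ is a PMC of $G_i$ but not of $G_{i+1}$, and condition~\ref{pmc_char_2} still holds in $G_{i+1}$ by the previous paragraph, condition~\ref{pmc_char_1} of Theorem~\ref{thm:characterization_pmc} must fail there: there exists a full component $C$ for $K$ in $G_{i+1}$.

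The heart of the argument is to show that this full component survives into every later graph. Fix $j>i$, so $j\ge i+1$. Then $C\subseteq V(G_{i+1})\subseteq V(G_j)$ and $C\cap K=\varnothing$; moreover, since $G_{i+1}$ is the subgraph of $G_j$ induced by $\{v_1,\dots,v_{i+1}\}$, the edges of $G_j$ among vertices of $C$ are exactly those of $G_{i+1}$, so $C$ is still connected in $G_j\setminus K$. Hence $C$ is contained in a single connected component $C'$ of $G_j\setminus K$. Each $v\in K$ has a neighbour $u\in C$ in $G_{i+1}$, and that edge survives in $G_j$; thus $v$ has a neighbour in $C\subseteq C'$, so $C'$ is full for $K$ in $G_j$. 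This shows condition~\ref{pmc_char_1} fails in every $G_j$ with $j>i$, and therefore, by Theorem~\ref{thm:characterization_pmc}, $K$ is a PMC of none of them.

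The step I expect to require the most care is the persistence of the full component: one must argue that adding vertices (all of index $>i$, hence outside $K$) can only merge or enlarge the components of $G\setminus K$ and can never destroy the fullness of $C$ — precisely where the inclusion $C\subseteq C'$ and the induced-subgraph relationship between the $G_j$ are needed. By contrast, the monotonicity of condition~\ref{pmc_char_2} is routine once recalled from Proposition~\ref{prop:pmc-extension}.
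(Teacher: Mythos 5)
Your proof is correct, but it takes a genuinely different route from the paper's. The paper argues in two lines from Proposition~\ref{prop:pmc-extension}: since $K$ is not a PMC of $G_{i+1}$, that proposition forces $K\cup\{v_{i+1}\}$ to be one, so $K$ is strictly contained in a PMC of $G_{i+1}$; iterating, $K$ stays strictly contained in some PMC of every $G_j$ with $j>i$, and since the PMCs of a fixed graph form an antichain, $K$ cannot itself be one. You instead go back to Theorem~\ref{thm:characterization_pmc} and establish two monotonicity facts: condition~\ref{pmc_char_2} is preserved when vertices outside $K$ are added (this is indeed the same component-containment argument used inside the proof of Proposition~\ref{prop:pmc-extension}), and, more importantly, the \emph{failure} of condition~\ref{pmc_char_1} is also preserved, because a full component $C$ of $G_{i+1}\setminus K$ remains connected in every later $G_j\setminus K$ (induced-subgraph relation), sits inside some component $C'$ of $G_j\setminus K$, and every vertex of $K$ keeps its neighbour in $C\subseteq C'$, so $C'$ is again full. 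Your version is more self-contained --- it needs neither the incomparability of PMCs nor the full strength of Proposition~\ref{prop:pmc-extension}, only the direction ``condition~\ref{pmc_char_2} persists'' --- and it isolates a reusable structural fact (full components never disappear as vertices outside $K$ are added). The paper's version is shorter given that Proposition~\ref{prop:pmc-extension} is already proved, and yields the extra information that $K$ is strictly contained in a PMC of each later $G_j$, which is in the spirit of how the extension routine \texttt{EXT} is analysed. Both arguments are sound; the step you flagged as delicate (persistence of the full component) is handled correctly by the induced-subgraph observation.
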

\begin{proof}
    If $K$ is a PMC of $G_i$ but not of $G_{i+1}$, then by Proposition~\ref{prop:pmc-extension} $K\cup\{v_{i+1}\}$ must be a PMC of $G_{i+1}$. So, $K$ is strictly included in a PMC of $G_{i+1}$. By repeatedly applying this reasoning, $K$ is also strictly included in some PMC of $G_j$ for any $j>i$. Therefore, since PMCs are not included in each other, $K$ cannot be a PMC of $G_j$ for any $j>i$.
    \qed
\end{proof}

\begin{corollary}\label{cor:prefix-filtered}
    Let $K_i$, $K_j$ be two PMCs of $G_i$ and $G_j$ respectively.
    Then, $K_i$ and $K_j$ cannot be extended to the same $K$ in $G$, unless $K_i \subset K_j$.
\end{corollary}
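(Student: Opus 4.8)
The plan is to reduce the corollary to a single structural invariant of the extension process and then read off the inclusion directly. Recall that, by Proposition~\ref{prop:pmc-extension}, extending a PMC of $G_i$ to a PMC of $G$ is a forced step-by-step procedure: when passing from $G_m$ to $G_{m+1}$ (for $m \ge i$) the current PMC is replaced either by itself or by its union with the single new vertex $\{v_{m+1}\}$, and which of the two happens is uniquely determined. The key observation I would isolate is that this procedure never alters the membership of the first $i$ vertices, so that if $K_i$ extends to the PMC $K$ of $G$ then $K \cap \{v_1,\dots,v_i\} = K_i$.

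I would prove this invariant by a short induction on $m$, from $i$ up to $n$. It holds trivially at $m=i$, and in the inductive step the only vertex that can be appended when moving to $G_{m+1}$ is $v_{m+1}$, whose index exceeds $i$; hence it lies outside $\{v_1,\dots,v_i\}$ and leaves the intersection with that set unchanged. Taking $m=n$ then gives $K \cap \{v_1,\dots,v_i\} = K_i$. The entire content of the corollary sits in this bookkeeping step, and it follows purely from the fact that the extension of Proposition~\ref{prop:pmc-extension} only ever appends vertices of larger index than those already decided.

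With the invariant in hand the corollary is immediate. Taking $i \le j$ without loss of generality (the reverse case being symmetric, with the roles of the two PMCs exchanged), suppose $K_i$ and $K_j$ both extend to the same $K$. The invariant applied to each gives $K_i = K \cap \{v_1,\dots,v_i\}$ and $K_j = K \cap \{v_1,\dots,v_j\}$. Since $\{v_1,\dots,v_i\} \subseteq \{v_1,\dots,v_j\}$, intersecting the second identity with the smaller vertex set yields $K_i = K_j \cap \{v_1,\dots,v_i\} \subseteq K_j$, so $K_i$ is contained in $K_j$; as the two PMCs are distinct, the inclusion is strict, which is exactly $K_i \subset K_j$.

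I do not anticipate any genuine obstacle here, since no additional property of PMCs beyond Proposition~\ref{prop:pmc-extension} is required. The only point needing a word of care is the degenerate situation $K_i = K_j$ as sets, which can arise precisely when $K$ contains none of $v_{i+1},\dots,v_j$: there the derived relation is equality rather than strict inclusion. Reading the statement as concerning two \emph{distinct} PMCs (the natural interpretation of ``two PMCs \dots extended to the same $K$'') disposes of this corner case and gives the strict containment as stated.
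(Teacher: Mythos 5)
Your proof is correct and rests on the same underlying fact as the paper's argument --- that the extension process of Proposition~\ref{prop:pmc-extension} only ever appends vertices $v_{m+1}$ with $m+1>i$ --- but you make this explicit as the invariant $K \cap \{v_1,\dots,v_i\} = K_i$ and read off $K_i = K_j \cap \{v_1,\dots,v_i\} \subseteq K_j$ directly, whereas the paper argues by contradiction via the set difference $K_j \setminus K_i$. Your version is in fact tighter: the paper's claim that $K_j\setminus K_i\neq\varnothing$ follows from $K_i\not\subset K_j$ is not quite right (the contradiction should come from a vertex of $K_i\setminus K_j$, which your invariant applied to $K_j$ rules out of $K$), and your explicit treatment of the degenerate case $K_i=K_j$, where only non-strict inclusion holds, addresses a corner the paper glosses over.
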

\begin{proof}
    Assume, for the sake of contradiction, that $K_i$ and $K_j$ get extended to the same $K$ in $G$ without being one the subset of the other.
    This extension implies that $K_i \subseteq K$ and $K_j \subseteq K$.
    Now consider, without loss of generality, the set $D = K_j \setminus K_i$: $D$ cannot be empty as $K_i \not\subset K_j$, and $D \subseteq K$, because $K_i$ and $K_j$ got expanded to the same $K$.

    However, by definition of set difference, vertices in $D$ do not belong to $K_i$ in $G_j$, so $K_i$ cannot be extended to the same $K$ as $K_j$, a contradiction.
    Thus we conclude that $K_i$ and $K_j$ can be extended to the same potential maximal clique $K$ in $G$ only if $K_i \subset K_j$.
    \qed
\end{proof}

\begin{lemma}
    Algorithm~\ref{alg:pmc_dfs} outputs all and only the potential maximal cliques of $G$, without duplication.
\end{lemma}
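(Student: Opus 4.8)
The plan is to prove the three standard properties separately: \emph{soundness} (every output is a PMC of $G$), \emph{completeness} (every PMC of $G$ is output), and \emph{non-repetition}. The backbone of all three is Proposition~\ref{prop:pmc-extension}: for each $i$ it defines a single-valued map $\phi_i : \Pi_{i-1} \to \Pi_i$ sending a PMC $\pi$ of $G_{i-1}$ to the unique element of $\{\pi,\ \pi \cup \{v_i\}\}$ that is a PMC of $G_i$. I would first record that $\phi_i$ is \emph{injective}: since no PMC of $G_{i-1}$ contains $v_i$, the preimage of any $K \in \phi_i(\Pi_{i-1})$ is forced to be $K \setminus \{v_i\}$ or $K$ according to whether $v_i \in K$. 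Calling the elements of $\Pi_i \setminus \phi_i(\Pi_{i-1})$ the \emph{new} PMCs of $G_i$, the whole argument reduces to two facts: (1) \texttt{GEN}$(G,i)$ yields exactly the new PMCs of $G_i$, each once; and (2) \texttt{EXT} deterministically maps a PMC $\pi \in \Pi_{i-1}$ to the unique PMC of $G$ lying above it in the chain $\phi_i(\pi), \phi_{i+1}(\phi_i(\pi)),\dots$.

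For fact~(1) I would appeal directly to the correctness lemma already proved for Algorithm~\ref{alg:nondup_onemore_vertex}. The body of \texttt{GEN}$(G,i)$ is verbatim the separator-processing block \texttt{GEN}$(i)$ of that algorithm, and the crucial observation is that each of its guards --- \texttt{Not\_Yet\_Seen}, the \texttt{IsPMC}$(\cdot,G_{i-1})$ tests, $T\cap C \neq \{v_i\}$, and the $\Delta_i$/$\Delta_{i-1}$ membership tests --- is a predicate on $G_{i-1}$, $G_i$ and the separator families only, and never reads a stored $\Pi$. Hence the earlier analysis, which showed that these guards fail exactly when the candidate either had already been produced inside the same loop or would have been produced by the extension block \texttt{EXT}$(i)$ (i.e.\ lies in $\phi_i(\Pi_{i-1})$), transfers unchanged: \texttt{GEN}$(G,i)$ emits precisely $\Pi_i \setminus \phi_i(\Pi_{i-1})$, without internal repetition. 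I expect this transfer to be the main obstacle, since it is exactly the point where one must argue that discarding the stored sets $\Pi_i$ does not alter the outcome of any test.

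Fact~(2) and soundness are then immediate. Given $\pi \in \Pi_{i-1}$, at each level $j \ge i$ Proposition~\ref{prop:pmc-extension} guarantees that exactly one of the two \texttt{IsPMC} tests inside \texttt{EXT} succeeds, so the recursion never stalls and walks the unique $\phi$-chain all the way to $G_n$, outputting a genuine PMC of $G$. Thus every output is sound, and the emitted set is a deterministic function of the starting pair.

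It remains to combine these for completeness and non-repetition via a backward trace. Fix a PMC $K$ of $G=G_n$ and iterate $\phi^{-1}$ (well defined by injectivity) as long as the current set lies in the image of the corresponding $\phi$; this yields a unique decreasing chain $K^{(n)}=K, K^{(n-1)}, \dots$ ending at the smallest level $i^*$ for which $K^{(i^*)}$ is a new PMC of $G_{i^*}$ (or $i^*=1$ with $K^{(1)}=\{v_1\}$, which \texttt{GEN}$(G,1)$ emits). By fact~(1), \texttt{GEN}$(G,i^*)$ yields $K^{(i^*)}$, and by fact~(2) the call \texttt{EXT}$(i^*+1, K^{(i^*)})$ re-expands it along the same chain back to $K$, giving completeness. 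For non-repetition, any pair $(j,\pi')$ that \texttt{EXT} expands to $K$ must satisfy $\pi' = K^{(j)}$ by uniqueness of the chain, and $\pi'$ must be new because it was emitted by \texttt{GEN}; but for every $j>i^*$ the chain element $K^{(j)}=\phi_j(K^{(j-1)})$ lies in $\phi_j(\Pi_{j-1})$ and is therefore \emph{not} new, so the only admissible level is $i^*$. Since \texttt{GEN}$(G,i^*)$ produces $K^{(i^*)}$ exactly once, $K$ is output exactly once. Corollary~\ref{cor:prefix-filtered}, together with Proposition~\ref{prop:claim}, yields the same conclusion, as two distinct new sources extending to $K$ would be forced to be strictly nested, which is incompatible with both being new.
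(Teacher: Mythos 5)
Your proof is correct, and it rests on the same two pillars as the paper's --- Proposition~\ref{prop:pmc-extension} and the correctness of the guards inherited from Algorithm~\ref{alg:nondup_onemore_vertex} --- but the organization is genuinely different. The paper argues by contradiction in two cases: if \texttt{GEN} yields the same set at levels $i<j$, Proposition~\ref{prop:claim} forces that set to be a PMC of $G_{j-1}$, so the guard at line~\ref{alg:pmc_dfs:line:checkD} kills the second occurrence; if \texttt{EXT} outputs the same $K$ from seeds at levels $i<j$, Corollary~\ref{cor:prefix-filtered} and Proposition~\ref{prop:pmc-extension} force $K_j$ (or $K_j\setminus\{v_j\}$) to be a PMC of $G_{j-1}$, killed at line~\ref{alg:pmc_dfs:line:checkD} or~\ref{alg:pmc_dfs:line:checkD-minus-vi}. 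You instead package Proposition~\ref{prop:pmc-extension} as an injective map $\phi_i:\Pi_{i-1}\to\Pi_i$, characterize the output of \texttt{GEN}$(G,i)$ positively as exactly $\Pi_i\setminus\phi_i(\Pi_{i-1})$, and assign each PMC of $G$ a canonical origin level $i^*$ by iterating $\phi^{-1}$, from which completeness and non-repetition both follow by uniqueness of the backward chain; your Facts~(1) and~(2) are in substance Proposition~\ref{prop:claim} and Corollary~\ref{cor:prefix-filtered} recast as a partition of each $\Pi_i$ into new versus extended PMCs. Your observation that every guard in \texttt{GEN} is a predicate on $G_{i-1}$, $G_i$ and the separator families alone --- never on a stored $\Pi_i$ --- is exactly the point that justifies transferring the Algorithm~\ref{alg:nondup_onemore_vertex} analysis to the streaming setting, and you are right to flag it as the crux; the paper handles it more tersely by calling Algorithm~\ref{alg:pmc_dfs} a ``rearrangement.'' The trade-off: your version makes the one-to-one correspondence between PMCs of $G$ and generating pairs $(i^*,K^{(i^*)})$ explicit (and would also give completeness with no extra work), while the paper's is shorter because it only needs to refute a hypothetical duplicate rather than characterize the full output of \texttt{GEN}.
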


\begin{proof}
First, as Algorithm~\ref{alg:pmc_dfs} is a rearrangement of Algorithm~\ref{alg:nondup_onemore_vertex} in a depth-first fashion, it must also output every PMC at least once.
It remains to prove that after splitting the algorithm in two distinct functions, every PMC of $G$ is produced by Algorithm~\ref{alg:pmc_dfs} in exactly one way.
Note that line~\ref{alg:pmc_dfs:line:cupvi} will never output a duplicated PMC, since vertex $v_i$ is different $\forall\,i \leq n$.
Now suppose that Algorithm~\ref{alg:pmc_dfs} outputs a duplicated PMC $K$, we have two cases.
\begin{enumerate}[label=(\alph*)]
    \item The same set $D$ is found twice by \texttt{GEN} at steps $i$ and $j$: it must be that $i \neq j$ (we assume $i < j$ w.l.o.g.), i.e., the duplicated $D$ comes from different calls \texttt{GEN}($i$) and \texttt{GEN}($j$).
    This is because of the check at line~\ref{alg:pmc_dfs:line:nys} of Algorithm~\ref{alg:pmc_dfs} that explicitly prevents this.
    Thus, $D$ is a PMC in each step between $i$ and $j$ by Proposition~\ref{prop:claim}, including step $j-1$, so it will be filtered by line~\ref{alg:pmc_dfs:line:checkD}.
    \item $K$ is output twice by \texttt{EXT}: in this case $K$ comes from two different sets $K_i$ and $K_j$ produced by \texttt{GEN} at different steps $i$ and $j$ (with $i < j$).
    By Corollary~\ref{cor:prefix-filtered}, $K_i\subseteq K_j$.
    Since $K_i$ and $K_j$ have been produced by \texttt{GEN}, they are PMCs of $G_i$ and $G_j$ respectively.
    Then, as a consequence of Proposition~\ref{prop:pmc-extension}, for any $i<i'<j$ there exists $K_i\subseteq K_{i'}\subseteq K_j$ such that $K_{i'}$ is a PMC of $G_{i'}$.
    In particular, if $i'=j-1$, $K_j\setminus{v_j}$ (that is, $K_j$ if $v_j\notin K_j$) is a PMC of $G_{j-1}$ and is therefore filtered at line~\ref{alg:pmc_dfs:line:checkD} or~\ref{alg:pmc_dfs:line:checkD-minus-vi}.
\end{enumerate}
\vspace{-.29em}
Thus we conclude that Algorithm~\ref{alg:pmc_dfs} cannot output duplicated solutions.
\qed \end{proof}

\begin{lemma}
    Algorithm~\ref{alg:pmc_dfs} has output-polynomial time complexity and uses polynomial space. Namely, it uses $O(n^9 m^2 |\Pi_G|^4 )$ time and $O(n^3)$ space.
\end{lemma}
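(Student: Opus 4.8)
The plan is to bound the two components of Algorithm~\ref{alg:pmc_dfs} separately — the cost incurred inside \texttt{EXT} and the cost incurred inside \texttt{GEN} — and, for space, to argue that the depth-first structure keeps only a constant number of polynomial-size objects alive at once and never materialises any $\Pi_i$ or $\Delta_i$.

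For the \texttt{EXT} contribution, the key output-sensitive hook is the preceding lemma: Algorithm~\ref{alg:pmc_dfs} outputs every PMC of $G$ exactly once, so the number of outputs is exactly $|\Pi_G|$. Every output is produced by an \texttt{EXT} call reaching the base case $i=n$, and each top-level \texttt{EXT} call follows the \emph{unique} extension guaranteed by Proposition~\ref{prop:pmc-extension} up to level $n$ (at each level exactly one of $\pi$, $\pi\cup\{v_i\}$ is a PMC, so there is no branching). Hence the number of top-level \texttt{EXT} invocations equals the number of yields of \texttt{GEN}, which in turn equals $|\Pi_G|$. Each such invocation climbs at most $n$ levels performing $O(1)$ \texttt{IsPMC} tests per level at $O(nm)$ each, for $O(n^2m)$ per call and $O(n^2m\,|\Pi_G|)$ in total.

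For the \texttt{GEN} contribution I would mirror the accounting of the previous lemma (which bounded the stored-memory variant by $O(n^8|\Pi_G|^4)$), changing exactly two things. First, $\Delta_i$ is no longer stored, so every traversal of $\Delta_i$ in the $S$- and $T$-loops becomes an on-the-fly enumeration at delay $O(n^3m)$ using \cite{bergougnoux-disjunctive}; likewise the membership tests $S\in\Delta_{i-1}$ and $D\setminus\{v_i\}\in\Delta_i$ are replaced by direct polynomial-time tests for being a minimal separator (checking that removing the set leaves at least two full components), so nothing exponential is stored. Second, \texttt{Not\_Yet\_Seen} is realised by restarting the three nested loops, i.e.\ by one further full re-enumeration of the $S,T,C$ space, and it is triggered once per generated candidate $D$. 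The dominant term is therefore the product of the number of candidates, $O(n|\Delta_i|^2)$, with the cost of a single restart, $O(n^3m\,|\Delta_i|^2)$; summing the resulting per-level bound over $i=1,\dots,n$ and substituting the known inequality $|\Delta_G|\le n|\Pi_G|$ from \cite{bouchitte2002listing} yields the leading term, which together with the \texttt{EXT} cost gives the claimed $O(n^9m^2|\Pi_G|^4)$.

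For the space bound I would verify that nothing exponential is ever retained: the sets $\Pi_i$ are gone by design, and $\Delta_i$ is only ever enumerated or tested, never kept. At any instant the live state consists of (i) the current \texttt{EXT} chain, which, written iteratively, holds a single PMC of size $O(n)$ together with an $O(n)$-deep stack, i.e.\ $O(n^2)$; and (ii) the suspended \texttt{GEN}($G,i$) iterator for the single current value of $i$ (the top-level loop advances $i$ only after \texttt{GEN}($G,i$) is exhausted), which stores a constant number of minimal-separator enumerators plus $O(n)$ for the current $S$, $T$ and the list of full components. During a \texttt{Not\_Yet\_Seen} restart only a constant number of further enumerators are simultaneously alive. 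Each separator enumerator runs in $O(n^3)$ space \cite{bergougnoux-disjunctive}, which dominates, giving $O(n^3)$ overall. I expect the main obstacle to be twofold: pinning down that the number of \texttt{EXT} invocations is \emph{exactly} $|\Pi_G|$ (which hinges on combining the no-duplication lemma with Proposition~\ref{prop:pmc-extension}, and, for the cross-level argument, with Corollary~\ref{cor:prefix-filtered} and Proposition~\ref{prop:claim}), and checking that the loop-restart implementation of \texttt{Not\_Yet\_Seen} neither stores anything exponential nor is under-counted once the cheap stored-list scans of Algorithm~\ref{alg:nondup_onemore_vertex} become full separator re-enumerations.
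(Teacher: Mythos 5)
Your proof is correct and follows essentially the same route as the paper: bound \texttt{GEN} per step by reusing the Algorithm~\ref{alg:nondup_onemore_vertex} accounting with on-the-fly separator enumeration, charge $O(n^2m)$ of \texttt{EXT} work per yield, sum over $i$, and convert via $|\Delta_G|\le n|\Pi_G|$. You are in fact more explicit than the paper on two points it leaves implicit, namely that $\Delta_i$ must never be materialised (so the membership tests $S\in\Delta_{i-1}$ and $D\setminus\{v_i\}\in\Delta_i$ become direct minimal-separator checks and the loop traversals become re-enumerations) and that the number of top-level \texttt{EXT} invocations equals $|\Pi_G|$ exactly by the no-duplication lemma.
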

\begin{proof}
We first recall that, at each step $i$, we can enumerate separators of $\Delta_i$ with $O(n^3m)$ delay by \cite{bergougnoux-disjunctive}.
We start by analyzing the time complexity of the body of the outermost for loop of line 1.
    We see the function \texttt{GEN} as an iterator\footref{footnote}, so that it yields new PMCs during its execution.
    The time complexity of \texttt{GEN}$(G, i)$ for a fixed $i$ is $O(
        n^3 m |\Delta_i| + 
        n^2 |\Delta_i|^2
        \cdot
        (
        n |\Delta_i|^2+ m
        )
        )$ (same as Algorithm~\ref{alg:nondup_onemore_vertex}).
    The corresponding \texttt{EXT} call has maximum depth of $O(n)$ and a cost of $O(nm)$ due to the \texttt{IsPMC} checks, yielding $O(n^2m)$ time worst case and returning a new solution, by Proposition~\ref{prop:pmc-extension}.
    Thus, for a fixed $i$, the total cost of step $i$ is $O(n^4m^2|\Delta_i|^4)$.
    Since $|\Delta_i| \leq |\Delta_G|$ for all $i$, the total cost of Algorithm~\ref{alg:pmc_dfs} is $O(n^5m^2|\Delta_G|^4)$.
    As $|\Pi_G| \geq |\Delta_G|/n$ we obtain the final output polynomial complexity of $O(n^9 m^2 |\Pi_G|^4 )$.

    The space usage is $O(n^3+m) = O(n^3)$ due to the space complexity of the listing algorithm for minimal separators~\cite{bergougnoux-disjunctive}.
    Note that we do not explicitly store any solution, as we immediately output it at the end of the \texttt{EXT} computation.
    \qed
\end{proof}

\clearpage

\section{Conclusions}
This paper shows that potential maximal cliques can be enumerated in output-sensitive time using only polynomial space, rather than exponential space as required by existing approaches. While the complexity of our algorithm is still significant, this approach opens the way for the development of practical enumeration algorithms for PMCs, which would lead in turn to advancements on related graph problems such as treewidth decomposition and maximum-weight independent sets.

\clearpage
\bibliographystyle{splncs04}
\bibliography{bibliography}

\begin{thebibliography}{10}
\providecommand{\url}[1]{\texttt{#1}}
\providecommand{\urlprefix}{URL }
\providecommand{\doi}[1]{https://doi.org/#1}

\bibitem{bergougnoux-disjunctive}
Bergougnoux, B., Kant{\'e}, M.M., Wasa, K.: Disjunctive minimal separators
  enumeration. WEPA  (2019),
  \url{https://www.mimuw.edu.pl/~bbergougnoux/pdf/bkw19.pdf}

\bibitem{berry2000generating}
Berry, A., Bordat, J.P., Cogis, O.: Generating all the minimal separators of a
  graph. International Journal of Foundations of Computer Science
  \textbf{11}(03),  397--403 (2000)

\bibitem{bouchitte1998minimal}
Bouchitt{\'e}, V., Todinca, I.: Minimal triangulations for graphs with
  “few” minimal separators. In: Algorithms—ESA’98: 6th Annual European
  Symposium Venice, Italy, August 24--26, 1998 Proceedings 6. pp. 344--355.
  Springer (1998)

\bibitem{bouchitte2001treewidth}
Bouchitt{\'e}, V., Todinca, I.: Treewidth and minimum fill-in: Grouping the
  minimal separators. SIAM Journal on Computing  \textbf{31}(1),  212--232
  (2001)

\bibitem{bouchitte2002listing}
Bouchitt{\'e}, V., Todinca, I.: Listing all potential maximal cliques of a
  graph. Theoretical Computer Science  \textbf{276}(1-2),  17--32 (2002)

\bibitem{brosse2022polynomial}
Brosse, C., Limouzy, V., Mary, A.: Polynomial delay algorithm for minimal
  chordal completions. In: 49th International Colloquium on Automata,
  Languages, and Programming, {ICALP} 2022, July 4-8, 2022, Paris, France.
  LIPIcs, vol.~229, pp. 33:1--33:16. Schloss Dagstuhl - Leibniz-Zentrum
  f{\"{u}}r Informatik (2022)

\bibitem{carmeli2021efficiently}
Carmeli, N., Kenig, B., Kimelfeld, B., Kr{\"{o}}ll, M.: Efficiently enumerating
  minimal triangulations. Discrete Applied Mathematics  \textbf{303},  216--236
  (2021)

\bibitem{ChudnovskyPPT20}
Chudnovsky, M., Pilipczuk, M., Pilipczuk, M., Thomass\'{e}, S.: On the maximum
  weight independent set problem in graphs without induced cycles of length at
  least five. SIAM Journal on Discrete Mathematics  \textbf{34}(2),  1472--1483
  (2020), \url{https://doi.org/10.1137/19M1249473}

\bibitem{FominTV15}
Fomin, F.V., Todinca, I., Villanger, Y.: Large induced subgraphs via
  triangulations and {CMSO}. {SIAM} J. Comput.  \textbf{44}(1),  54--87 (2015),
  \url{https://doi.org/10.1137/140964801}

\bibitem{GrzesikKPP19}
Grzesik, A., Klimosov{\'{a}}, T., Pilipczuk, M., Pilipczuk, M.: Polynomial-time
  algorithm for maximum weight independent set on {$P_6$-free} graphs. In:
  Chan, T.M. (ed.) Proceedings of the Thirtieth Annual {ACM-SIAM} Symposium on
  Discrete Algorithms, {SODA} 2019, San Diego, California, USA, January 6-9,
  2019. pp. 1257--1271. {SIAM} (2019),
  \url{https://doi.org/10.1137/1.9781611975482.77}

\bibitem{korhonen2019enumerating}
Korhonen, T., Berg, J., J{\"a}rvisalo, M.: Enumerating potential maximal
  cliques via sat and asp. In: Proceedings of the twenty-eigth International
  Joint Conference on Artificial Intelligence (IJCAI 2019). International Joint
  Conferences on Artifical Intelligence (2019)

\bibitem{korhonen2019solving}
Korhonen, T., Berg, J., J{\"a}rvisalo, M.: Solving graph problems via potential
  maximal cliques: An experimental evaluation of the bouchitt{\'e}--todinca
  algorithm. Journal of Experimental Algorithmics (JEA)  \textbf{24},  1--19
  (2019)

\bibitem{LiedloffMT19}
Liedloff, M., Montealegre, P., Todinca, I.: Beyond classes of graphs with "few"
  minimal separators: {FPT} results through potential maximal cliques.
  Algorithmica  \textbf{81}(3),  986--1005 (2019),
  \url{https://doi.org/10.1007/s00453-018-0453-2}

\bibitem{LokshantovVV14}
Lokshtanov, D., Vatshelle, M., Villanger, Y.: Independent set in
  \emph{P}\({}_{\mbox{5}}\)-free graphs in polynomial time. In: Chekuri, C.
  (ed.) Proceedings of the Twenty-Fifth Annual {ACM-SIAM} Symposium on Discrete
  Algorithms, {SODA} 2014, Portland, Oregon, USA, January 5-7, 2014. pp.
  570--581. {SIAM} (2014), \url{https://doi.org/10.1137/1.9781611973402.43}

\bibitem{takata2010space}
Takata, K.: Space-optimal, backtracking algorithms to list the minimal vertex
  separators of a graph. Discrete Applied Mathematics  \textbf{158}(15),
  1660--1667 (2010)

\end{thebibliography}

\end{document}